
\documentclass[onecolumn,a4paper]{IEEEtran}
\IEEEoverridecommandlockouts
\usepackage[left=1.435cm,right=1.5cm,top=1.85cm,bottom=4cm]{geometry}

%
%

\interdisplaylinepenalty=2500 


\usepackage[utf8]{inputenc}
\hyphenation{op-tical net-works semi-conduc-tor}
\usepackage{amsmath,amssymb,amsfonts}
\usepackage{mathtools}
\usepackage{amsthm}
\usepackage{algorithmic}
\usepackage{graphicx}
\usepackage{textcomp}
\usepackage{tikz}
\usepackage{caption}
\usepackage{cuted}
\usepackage{pgfgantt}
\usepackage{pdflscape}
\usepackage{pst-plot}
\usepackage{comment} 
\usepackage{cases}
\usepackage{nccfoots}
\usepackage{lineno,hyperref}
\usetikzlibrary{spy}
\usetikzlibrary{positioning,calc}
\usetikzlibrary{decorations.pathmorphing,calc,shapes,shapes.geometric,patterns}
\usetikzlibrary{shapes.multipart}
\usepackage{xfrac}
\usepackage{colortbl}
\usepackage{cancel} 
\usetikzlibrary{arrows,positioning,calc,intersections}
\usetikzlibrary{datavisualization.formats.functions}
\def\BibTeX{{\rm B\kern-.05em{\sc i\kern-.025em b}\kern-.08em
    T\kern-.1667em\lower.7ex\hbox{E}\kern-.125emX}}
    
\usepackage{romannum}
\usepackage{pgfplots}
\usepgfplotslibrary{fillbetween}
\usetikzlibrary{arrows, decorations.markings}

\newtheorem{theorem}{Theorem}
\newtheorem*{theorem*}{Theorem}
\newtheorem{lemma}[theorem]{Lemma}

\newtheorem{definition}[theorem]{Definition}

\newtheorem{remark}[theorem]{Remark}

\newcommand{\seta}{\ensuremath{\mathcal{A}}}
\newcommand{\setx}{\ensuremath{\mathcal{X}}}
\newcommand{\sety}{\ensuremath{\mathcal{Y}}}
\newcommand{\setz}{\ensuremath{\mathcal{Z}}}
\newcommand{\setd}{\ensuremath{\mathcal{D}}}
\newcommand{\setu}{\ensuremath{\mathcal{U}}}
\newcommand{\sete}{\ensuremath{\mathcal{E}}}

\newcommand{\setp}{\ensuremath{\mathcal{P}}}

\newcommand{\sett}{\ensuremath{\mathcal{T}}}

\newcommand{\var}{\ensuremath{\text{Var}}}

\newcommand{\up}{u^\prime}
\newcommand{\upp}{u^{\prime\prime}}
\newcommand{\Mp}{M^\prime}
\newcommand{\Mpp}{M^{\prime\prime}}
\newcommand{\dc}{\mathcal{D}}
\newcommand{\dcp}{\mathcal{D}^\prime}
\newcommand{\dcpp}{\mathcal{D}^{\prime\prime}}
\newcommand{\C}{\mathcal{C}}
\newcommand{\Cp}{\mathcal{C}^\prime}
\newcommand{\Cpp}{\mathcal{C}^{\prime\prime}}

\newcommand{\mc}{\mathcal}




\newcommand{\circlearrow}{}
\DeclareRobustCommand{\circlearrow}{%
  \mathrel{\vphantom{\rightarrow}\mathpalette\circle@arrow\relax}%
}
\newcommand{\circle@arrow}[2]{%
  \m@th
  \ooalign{%
    \hidewidth$#1\circ\mkern1mu$\hidewidth\cr
    $#1-$\cr}%
}

\DeclarePairedDelimiterX{\infdivx}[2]{(}{)}{%
  #1\;\delimsize\|\;#2%
}

\pgfplotsset{compat=1.18}
\begin{document}
\title{Common Randomness Generation from Sources with \color{black} Infinite Polish \color{black} Alphabet}


\author{
\IEEEauthorblockN{Wafa Labidi \IEEEauthorrefmark{1}\IEEEauthorrefmark{2}, Rami Ezine\IEEEauthorrefmark{1}, Moritz Wiese\IEEEauthorrefmark{1}, Christian Deppe\IEEEauthorrefmark{2} and Holger Boche\IEEEauthorrefmark{1}}
\IEEEauthorblockA{\IEEEauthorrefmark{1}Technical University of Munich\\
\IEEEauthorrefmark{2}Technical University of Braunschweig
}
}

\maketitle

\begin{abstract}
\color{black}We investigate the problem of common randomness (CR) generation in the basic two-party communication setting in which a sender and a receiver aim to agree on a common random variable with high probability. The terminals observe independent and identically distributed (i.i.d.) samples of sources with an arbitrary distribution defined on a \textit{Polish} alphabet and are allowed to communicate as little as possible over a noisy, memoryless channel. We establish single-letter upper and lower bounds on the CR capacity for the specified model. The derived bounds hold with equality except for at most countably many points where discontinuity issues might arise. 

\end{abstract}
\begin{IEEEkeywords}Common randomness generation, sources with Polish alphabet, memoryless channels, infinite alphabet
\end{IEEEkeywords}
 
\section{Introduction}

In the standard two-source model for common randomness (CR) generation, the sender Alice and the receiver Bob aim to generate a common random variable with a high probability of agreement. 
With the availability of CR, it is possible to implement correlated random protocols that often outperform the deterministic ones or the ones using independent randomization \cite{Sudan,part2}.
For instance, CR allows an enormous performance gain in the identification (ID) scheme, which is a novel approach in communications developed by Ahlswede and Dueck \cite{Idchannels} in 1989. In contrast to the classical transmission scheme proposed by Shannon \cite{Shannon}, the resource CR allows a considerable increase in the ID capacity of channels \cite{trafo,part2,ahlswede2021}. \color{black}
The ID approach is much more efficient than the classical transmission scheme for many new applications with high reliability and latency requirements including digital watermarking \cite{MOULINwatermarking,AhlswedeWatermarking,SteinbergWatermarking}, industry 4.0 \cite{industry4.0} and 6G communication systems \cite{6Gcomm}.

Another obvious application of CR generation is in the secret key generation problem since the generated CR is used as a secret key in cryptography \cite{part1}\cite{maurer}. In this paper, however, no secrecy constraints are imposed. It is also worth mentioning that CR is highly relevant in the modular coding scheme for secure communication, where non-secure CR can be used as a seed \cite{semanticsecurity} and \cite{boche2022semantic}.

For all these reasons, CR generation for future communication networks is a central research question in several large 6G research projects \cite{researchgroup1}\cite{researchgroup2}. 
It is expected that CR will play an important role in achieving  the robustness, low-latency, ultra-reliability, resilience  \cite{6Gcomm} \cite{6Gpostshannon} and security requirements \cite{semanticsecurity} imposed by these future communication systems. 

\color{black}Over the past decades, the problem of CR generation from correlated discrete sources with finite alphabet has been investigated in several studies. 
Ahlswede and Csiszár initially introduced this problem in \cite{part2}. \color{black} They considered a two-source model for CR generation, where the sender and receiver communicate over a rate-limited discrete noiseless channel, as well as another two-source model that allows communication over noisy memoryless channels. A single-letter CR capacity formula was derived for both models in \cite{part2}.
Later, \color{black}the results on CR capacity have been extended to Gaussian channels in \cite{globecom}. \color{black}
Recently, the authors in \cite{SISOfasingCR} and in \cite{MIMOfadingCR} focused on 
the problem of CR generation over fading channels. A more general scenario has been investigated in \cite{UCR}, where Alice is allowed to send information to Bob via an arbitrary single-user channel. 
However, to the best of our knowledge, \color{black}no studies apart from the ones in \cite{CRgaussiansources}\cite{CRcountable} \ have addressed the problem of CR generation from sources with infinite alphabet and focused on deriving the CR capacity for such models.

\color{black}In the theory of CR generation, our aim is to construct models that closely resemble reality. This involves examining continuous alphabets, many of which fall under the category of Polish alphabets. \color{black}
The main contribution of our work lies in deriving single-letter lower and upper bounds on the CR capacity of a two-source model involving correlated arbitrary sources $X$ and $Y$ with infinite Polish alphabets \color{black}  encompassing $\mathbb{R}^n$, \color{black} assisted by one-way communication over noisy memoryless channels. This is done under the assumption that the mutual information between $X$ and $Y$ is finite. The transition to infinite Polish alphabets has significant consequences in terms of Shannon entropy convergence, variational distance convergence, etc. \color{black}As we will discuss in the subsequent sections of the paper, we might encounter discontinuity issues when dealing with infinite alphabets.\color{black}

In the proof of our results, we make use of a generalized typicality which is suitable for Polish alphabets. The latter was introduced in \cite{Mitran}. This concept reduces to strong typicality when dealing with finite alphabets. \color{black}Thus, our proof generalizes that of Ahlswede and Csiszár in \cite{part2}.\color{black} 

\quad \textit{ Paper Outline:}  In Section \ref{sec:preli}, we provide the formal definitions that characterize Polish spaces, as well as those characterizing the generalized typicality which was introduced in \cite{Mitran} and which is suitable for Polish alphabets.
 In Section \ref{sec:systemModel}, 
we present the system model for CR generation and provide the key definitions and our main results. \color{black}In Section \ref{proofthmsource1and2}, we prove \color{black} single-letter upper and lower bounds \color{black} on the CR capacity for our specified model. \color{black}
 Section \ref{sec:Conclusions} contains concluding remarks.

\section{Preliminaries} \label{sec:preli}
{\color{black}{\subsection{Polish Alphabets}

\color{black}Polish spaces are fundamental in various areas of mathematics, including topology, analysis, and probability theory.
The term "Polish" pays homage to the pioneering contributions of Polish mathematicians such as Sierpiński, Kuratowski, and Tarski, who extensively studied these spaces.
A Polish space denotes a topological space that is both \emph{completely metrizable} and \emph{separable} \cite{klenke2020probability,dembo2009large}. This means it can be equipped with a metric such that every sequence in the space has a convergent subsequence, and it contains a countable, dense subset. \color{black}
Examples of Polish spaces include the set of real numbers ${\mathbb R}$ equipped with the standard Euclidean metric,  Euclidean spaces $\mathbb{R}^n$ with the standard topology \cite{klenke2020probability}, {\color{black}{finite sets}}, \color{black}the space of continuous functions over compact intervals endowed with the supremum norm, etc. 
Now, we define the weak convergence, as it will be significant for what follows.
We denote the weak convergence of a sequence of measures $P_n$ to a measure $P$ by $ \text{w-lim}_{n\to \infty} P_n = P.$
In certain texts, weak convergence might be represented using a different notation "weak* convergence" as used in \cite{mitran_polish}.
The Portemanteau theorem \cite[Theorem 13.16]{waldequation} provides equivalent \color{black}conditions for weak convergence \color{black} that will be used subsequently.
\begin{theorem}{(Portemanteau)}
    Let $\mathcal{X}$ be a Polish space with Borel $\sigma-$algebra $\sete_\setx$. A bounded sequence of probability measures $P_n, \quad n = 1 , 2 , \ldots$  on $(\setx,\sete_\setx)$ is said to converge weakly to a probability measure $P$ if any of the following equivalent conditions is true.
\begin{enumerate}
    \item $\text{w-lim}_{n\to \infty} P_n = P.$
    \item $\lim_{n\to \infty} \int f\ dP_n = \lim_{n\to \infty} \int f\ dP $  for all bounded continuous functions $f$.
    \item {\color{black}{$\lim_{n\to \infty} \int f\ dP_n = \lim_{n\to \infty} \int f\ dP $}} for all bounded measurable functions $f$ with $P(\setd_f)=0$, where $\setd_f$ is the set of points of discontinuity of $f$.
    
    \item $\lim_{n\to \infty} P_n(\seta)=P(\seta)$, for all $\seta \subset \sete_\setx$ with $P(\partial \seta)=0$, where $\partial\seta$ denotes the topological boundary of $\seta$.
\end{enumerate} 
\end{theorem}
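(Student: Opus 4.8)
The plan is to prove the chain of equivalences through a cycle of implications, treating conditions~(1) and~(2) as interchangeable, since weak convergence $\text{w-lim}_{n\to\infty}P_n=P$ is by definition the statement that $\int f\,dP_n\to\int f\,dP$ for every bounded continuous $f$. The substantive content therefore lies in linking the functional conditions~(2)--(3) to the set-theoretic condition~(4). I would organize the argument as $(2)\Rightarrow(4)\Rightarrow(3)\Rightarrow(2)$, passing through the classical closed-set and open-set inequalities as intermediate stepping stones.

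First, I would show that~(2) implies $\limsup_{n\to\infty}P_n(F)\le P(F)$ for every closed set $F$. Fixing a compatible metric $d$ on $\setx$, define the bounded continuous functions $f_k(x)=\max\bigl(0,\,1-k\,d(x,F)\bigr)$, which equal $1$ on $F$ and decrease pointwise to $\mathbf{1}_F$ as $k\to\infty$. Since $\mathbf{1}_F\le f_k$, applying~(2) gives $\limsup_n P_n(F)\le\limsup_n\int f_k\,dP_n=\int f_k\,dP$, and letting $k\to\infty$ with dominated convergence yields $\limsup_n P_n(F)\le P(F)$. Complementation converts this into the equivalent open-set inequality $\liminf_n P_n(G)\ge P(G)$ for every open $G$. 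To obtain~(4), take any $\seta\in\sete_\setx$ with $P(\partial\seta)=0$; then its interior $\seta^\circ$ and closure $\overline{\seta}$ satisfy $P(\seta^\circ)=P(\seta)=P(\overline{\seta})$, and sandwiching
\[
P(\seta)=P(\seta^\circ)\le\liminf_n P_n(\seta^\circ)\le\liminf_n P_n(\seta)\le\limsup_n P_n(\seta)\le\limsup_n P_n(\overline{\seta})\le P(\overline{\seta})=P(\seta)
\]
forces $P_n(\seta)\to P(\seta)$.

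The heart of the proof is $(4)\Rightarrow(3)$, carried out by the layer-cake representation. After an affine normalization I may assume $0\le f\le 1$, so that $\int f\,dP_n=\int_0^1 P_n(\{f>t\})\,dt$ and likewise for $P$. The crucial observation is that $\partial\{f>t\}\subseteq\{f=t\}\cup\setd_f$: any boundary point at which $f$ is continuous must satisfy $f=t$. Because the level sets $\{f=t\}$ are disjoint and $P$ is finite, at most countably many carry positive $P$-mass; together with the hypothesis $P(\setd_f)=0$ this gives $P(\partial\{f>t\})=0$ for all but countably many $t$. For each such $t$, condition~(4) yields $P_n(\{f>t\})\to P(\{f>t\})$, and since the integrands are uniformly bounded by $1$, dominated convergence passes the limit through the $t$-integral, establishing~(3). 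Finally, $(3)\Rightarrow(2)$ is immediate, as a bounded continuous function has empty discontinuity set and hence trivially satisfies $P(\setd_f)=0$.

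I expect the main obstacle to be the $(4)\Rightarrow(3)$ step, specifically the boundary inclusion $\partial\{f>t\}\subseteq\{f=t\}\cup\setd_f$ together with the accompanying argument that only countably many levels $t$ are problematic; this is where the metric structure and the measure-zero discontinuity hypothesis interact most delicately. The remaining implications are either definitional or rest on the Urysohn-type approximation of indicators by continuous functions, which the complete metrizability of $\setx$ makes available.
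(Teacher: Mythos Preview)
Your proof is correct and follows the standard textbook argument for the Portmanteau theorem. However, the paper does not actually supply its own proof of this statement: it is quoted as a preliminary result from the literature (specifically \cite[Theorem~13.16]{waldequation}) and used as a black box, so there is no in-paper argument to compare against. Your cycle $(2)\Rightarrow(4)\Rightarrow(3)\Rightarrow(2)$ via the Urysohn approximation of closed-set indicators and the layer-cake representation is exactly the classical route; the boundary inclusion $\partial\{f>t\}\subseteq\{f=t\}\cup\setd_f$ and the countability of heavy level sets are handled correctly.
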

It is worth mentioning that the second condition is commonly regarded as the definition of weak convergence. Weak convergence, introduced above, \color{black}induces weak topology on the set \color{black} of probability measures in $(\setx,\sete_\setx)$ \cite{klenke2020probability}.

 \color{black}
}}

\subsection{Typicality Criteria for Polish Alphabets} \label{subsec: typicality}
In this section, we recall the definition of generalized typicality (w.r.t. weak convergence) on Polish spaces introduced in \cite{Mitran}.
\begin{definition}
Given a sequence $x^n=(x_1,\ldots,x_n) \in \setx^n$, we define the associated empirical distribution $P_{x^n}$ as 
$P_{x^n}(A) := \frac {1}{n} \sum _{\ell =1}^{n} {\boldsymbol 1}_ {\{x_\ell \in A\}}.$
Analogously, given two sequences $x^n=(x_1,\ldots,x_n) \in \setx^n$ and $y^n=(y_1,\ldots,y_n) \in \sety^n$, the joint empirical distribution $P_{x^ny^n}$ is defined by
$P_{x^ny^n}(A \times B) := \frac {1}{n} \sum _{\ell =1}^{n} {\boldsymbol 1}_ {\{x_\ell \in A\}} {\boldsymbol 1}_ {\{y_\ell \in B\}}.$  
\end{definition}
A sequence $x^n=(x_1,\ldots,x_n)$ is called typical if its empirical distribution is close to some probability measure. In \cite{mitran_polish} closeness is assessed w.r.t. the weak topology. In particular, let $d_{\mathcal{P}(\setx)}(\cdot,\cdot)$ denote any metric on the space of probability measures $\mathcal{P}(\setx)$ that generates the weak topology. An example of such a metric is the Prohorov metric \cite{waldequation} defined below.
\begin{definition}
Let $\mu, \nu$ be two \color{black}probability \color{black} measures on a Polish space $\setx$ that has metric $d_\setx(\cdot,\cdot)$. \color{black} The Prohorov metric \cite{mitran_polish,klenke2020probability,dembo2009large}
 is defined as \color{black} $d^{\sf Pr}_ {\setx} (\mu , \nu ) := \max \{d'(\nu , \mu ), d'(\mu , \nu )\},$
\begin{equation*} \text{where }d'(\mu , \nu ) := \inf \{\epsilon > 0 \;|\; \mu (A) \leq \nu (A^{\epsilon }) + \epsilon ,\forall A \in {\cal E} \},\qquad \end{equation*} and
$A^{\epsilon } := \{x \in \setx \;|\; d_{\setx}(x, x^\prime) < \epsilon ~\text {for some } x^\prime \in A \}.$ 
\end{definition}
\color{black}In the following, we revisit the definition of typical sequences w.r.t. the generalized typicality notion introduced by Mitran in \cite{Mitran}.\color{black}
\begin{definition}
Let $d_{\setp(\setx)}(\cdot,\cdot)$ be any metric on the space of probability measures $\mathcal{P}(\setx)$ that induces the weak topology, e.g., the Prohorov metric. A sequence $x^n=(x_1,\ldots,x_n)$ is said to be $(P_X,\epsilon)-$typical w.r.t. the generalized typicality described above if$ d_\setx(P_{x^n},P_X)  < \epsilon. $
We represent the set of such typical sequences of length $n$ by $\sett_{\epsilon}^n(P_X)$.
\end{definition}
 If the set $\setx$ is finite, the definition of the aforementioned generalized typicality coincides with the definition of strong typicality \cite{mitran_polish} (except for the occasional requirement that \( P_X(a) = 0 \) if \( P_X(a) = 0 \)). Thus, typical sequences w.r.t. the generalized typicality introduced by Mitran can be considered as a generalized form of strongly typical sequences. \color{black}The generalized typicality \color{black} in \cite{Mitran} satisfies all the properties below, including Markov lemma. This typicality is sufficiently general to be applied to any distribution (discrete, continuous, or mixed), provided that the alphabet is a Polish space. 

\color{black} Theorem \ref{theorem:typicality}, Theorem \ref{theorem:mitranJoint} and Theorem \ref{theorem:mitranLargedev} outline the desirable properties that typical
and jointly typical sequences, w.r.t. Mitran typicality notion \cite{mitran_polish}, should possess. \color{black}
\begin{theorem}[\cite{Mitran}] The following three statements hold. \label{theorem:typicality}
\begin{enumerate}
   \item \color{black} Let $X_1,X_2,\ldots,X_n$ be independent RVs with values in $\setx$ and identical distribution $P_X$. Then for any $\epsilon>0,$
    \begin{equation} \lim _{n \rightarrow \infty } P\left ({ {{X}} ^{n} \in \sett_\epsilon ^{n}(P_{X}) }\right ) = 1. \end{equation} 

\item \color{black} For any $\epsilon > 0$, there is an $\bar{\epsilon}(\epsilon, P_{XY}) > 0$ such that if $(x^n,y^n) \in \sett^n_{\bar{\epsilon}(\epsilon, P_{XY})}(P_{XY})$ then $x^n \in \sett^n_{\epsilon}(P_X)$.
\end{enumerate}    
\end{theorem}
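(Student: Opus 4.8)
The plan is to treat the two claims separately, since the first is a probabilistic convergence statement while the second is a deterministic topological property of the typicality sets. Throughout I write $d_\setx(\cdot,\cdot)$ for the fixed metric generating the weak topology on $\mathcal{P}(\setx)$ (the one used to define $\sett^n_\epsilon$), and $d_{\setx\times\sety}(\cdot,\cdot)$ for the analogous metric on $\mathcal{P}(\setx\times\sety)$.

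For the first claim, I would observe that $d_\setx(P_{X^n},P_X)<\epsilon$ is exactly the event that the empirical measure $P_{X^n}$ lies in an $\epsilon$-ball around $P_X$ in the weak topology, so $P(X^n\in\sett^n_\epsilon(P_X))\to 1$ follows once $P_{X^n}$ is shown to converge weakly to $P_X$ in probability. First I would fix a countable family $\{f_k\}_{k\geq 1}$ of bounded continuous functions on $\setx$ that determines weak convergence; such a family exists because $\setx$ is separable and metrizable (e.g.\ a countable convergence-determining subset of the unit ball of bounded Lipschitz functions). For each fixed $k$, the quantity $\int f_k\,dP_{X^n}=\frac{1}{n}\sum_{\ell=1}^n f_k(X_\ell)$ is an average of i.i.d.\ bounded random variables, so the strong law of large numbers gives $\int f_k\,dP_{X^n}\to E[f_k(X)]=\int f_k\,dP_X$ almost surely. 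Taking the countable intersection over $k$ of these probability-one events yields, almost surely, $\int f_k\,dP_{X^n}\to\int f_k\,dP_X$ for every $k$, which by the choice of the determining family is equivalent to $P_{X^n}\to P_X$ weakly (this is Varadarajan's theorem). Almost-sure weak convergence gives $d_\setx(P_{X^n},P_X)\to 0$ almost surely, hence in probability, which is precisely the first claim.

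For the second claim, the key observation is that the $\setx$-marginal of the joint empirical distribution $P_{x^ny^n}$ equals the empirical distribution $P_{x^n}$, and likewise $P_X$ is the $\setx$-marginal of $P_{XY}$. Let $\pi:\mathcal{P}(\setx\times\sety)\to\mathcal{P}(\setx)$ denote the marginalization map $\pi(\mu)(A):=\mu(A\times\sety)$. I would first show $\pi$ is continuous for the weak topologies: if $\mu_k\to\mu$ weakly, then for any bounded continuous $g$ on $\setx$ the lifted function $(x,y)\mapsto g(x)$ is bounded and continuous on $\setx\times\sety$, so $\int g\,d\pi(\mu_k)=\int g(x)\,d\mu_k\to\int g(x)\,d\mu=\int g\,d\pi(\mu)$, giving $\pi(\mu_k)\to\pi(\mu)$ weakly. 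Since the weak topologies on these Polish spaces are metrizable, this sequential continuity is genuine continuity, in particular at the point $P_{XY}$. Continuity at $P_{XY}$ is exactly the required $\epsilon$-$\bar\epsilon$ statement: given the $\epsilon$-ball $\{d_\setx(\cdot,P_X)<\epsilon\}$ around $P_X=\pi(P_{XY})$, there is $\bar\epsilon=\bar\epsilon(\epsilon,P_{XY})>0$ with $d_{\setx\times\sety}(\nu,P_{XY})<\bar\epsilon$ implying $d_\setx(\pi(\nu),P_X)<\epsilon$. Applying this with $\nu=P_{x^ny^n}$ shows that $(x^n,y^n)\in\sett^n_{\bar\epsilon}(P_{XY})$ forces $d_\setx(P_{x^n},P_X)<\epsilon$, i.e.\ $x^n\in\sett^n_\epsilon(P_X)$.

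I expect the main obstacle to be the first claim, specifically justifying that the almost-sure weak convergence of empirical measures goes through on a general Polish space. The subtlety is that a single bounded continuous function does not characterize weak convergence, so separability of $\setx$ must be used to reduce to a countable determining family, ensuring that the countable intersection of the individual almost-sure events remains of probability one; this is the point where the Polish structure is genuinely used rather than merely convenient. The second claim is comparatively routine, the only care being that the lift $(x,y)\mapsto g(x)$ is indeed bounded and continuous on the product Polish space and that sequential continuity suffices because the relevant weak topologies are metrizable—both standard facts.
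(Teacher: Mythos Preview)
Your proposal is correct. The paper does not actually prove this theorem; it states immediately after Theorem~\ref{theorem:mitranLargedev} that ``For the proof of Theorem \ref{theorem:typicality}, Theorem \ref{theorem:mitranJoint} and Theorem \ref{theorem:mitranLargedev}, we refer the reader to \cite{Mitran}.'' Your argument---Varadarajan's theorem via a countable convergence-determining class and the SLLN for part~(1), and weak continuity of the marginalization map for part~(2)---is the standard route and is essentially what underlies the proof in the cited reference, so there is no meaningful divergence to report.
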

\color{black}
\begin{theorem}[\cite{Mitran}] \label{theorem:mitranJoint}
    Let $x^n$ be an input sequence to a stationary memoryless channel $P_{Y|X}$ \color{black}such that $x\mapsto P_{Y|X}(\cdot|x)$ is continuous \color{black} and let $Y^n$ be the corresponding output sequence. For every $\epsilon > 0$ and $\delta > 0$, there exists an $\overline{\epsilon}(\epsilon,\delta,P_{X},P_{Y|X})$ such that if $x^n \in \sett^n_{\overline{\epsilon}(\epsilon,\delta,P_X,P_{Y|X})}(P_X)$ for all $n$ greater than some $N$, then
    \begin{equation} \liminf _{n \rightarrow \infty } \Pr\left \{{ ( {{x}} ^{n}, {{Y}} ^{n}) \in \sett^{n}_\epsilon (P_{XY}\}}\right ) > 1 - \delta . \end{equation} \label{theorem:conditionalTyp}
\end{theorem}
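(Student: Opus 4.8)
The plan is to check weak convergence of the \emph{random} joint empirical measure $P_{x^n Y^n}$ to $P_{XY}:=P_X\otimes P_{Y|X}$ through finitely many test functions, and then to establish concentration of the corresponding empirical averages via a law-of-large-numbers estimate for the channel outputs. First I would exploit the fact that the metric $d_{\setp(\setx\times\sety)}$ generates the weak topology (recall $\setx\times\sety$ is again Polish): the open ball $\setb:=\{Q: d_{\setp(\setx\times\sety)}(Q,P_{XY})<\epsilon\}$ is a weak-open neighborhood of $P_{XY}$, so by the very definition of the weak topology as the initial topology of the integration functionals it contains a basic neighborhood cut out by finitely many bounded Lipschitz functions $f_1,\dots,f_k$ on $\setx\times\sety$ and some $\gamma>0$, i.e. $\{Q: |\int f_i\,dQ-\int f_i\,dP_{XY}|<\gamma,\ 1\le i\le k\}\subseteq\setb$. (Bounded Lipschitz functions are admissible since they form a convergence-determining class that induces the weak topology on a metric space.) It then suffices to show that, with probability tending to one, $|\int f_i\,dP_{x^nY^n}-\int f_i\,dP_{XY}|<\gamma$ for every $i$.

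For each $i$ set $g_i(x):=\int f_i(x,y)\,P_{Y|X}(dy\,|\,x)$. The crux is to argue that $g_i$ is bounded and continuous on $\setx$. Boundedness is clear from $\|g_i\|_\infty\le\|f_i\|_\infty$. For continuity take $x_m\to x$ and split $\int f_i(x_m,y)\,P_{Y|X}(dy|x_m)-\int f_i(x,y)\,P_{Y|X}(dy|x)$ into $\int[f_i(x_m,y)-f_i(x,y)]\,P_{Y|X}(dy|x_m)$ and $\int f_i(x,y)\,[P_{Y|X}(dy|x_m)-P_{Y|X}(dy|x)]$. The first term vanishes because $f_i$ is Lipschitz, so $\sup_y|f_i(x_m,y)-f_i(x,y)|\le L\,d_\setx(x_m,x)\to 0$; the second vanishes because the assumed continuity $x\mapsto P_{Y|X}(\cdot|x)$ is weak continuity, hence $P_{Y|X}(\cdot|x_m)\to P_{Y|X}(\cdot|x)$ weakly, and $f_i(x,\cdot)$ is bounded continuous, so the second condition of the Portemanteau theorem applies. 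This is exactly the step where the continuity hypothesis on the channel is indispensable, and I expect it to be the main obstacle.

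Next I would transfer the typicality of the input. Since $g_i$ is bounded continuous, the functional $Q\mapsto\int g_i\,dQ$ is weakly continuous at $P_X$; as $d_{\setp(\setx)}$ generates the weak topology there is a threshold $\bar\epsilon>0$ (take the smallest over $i=1,\dots,k$) such that $d_{\setp(\setx)}(Q,P_X)<\bar\epsilon$ forces $|\int g_i\,dQ-\int g_i\,dP_X|<\gamma/2$. Applying this with $Q=P_{x^n}$ for $x^n\in\sett^n_{\bar\epsilon}(P_X)$, and using $\int g_i\,dP_X=\int f_i\,dP_{XY}$ together with $\int g_i\,dP_{x^n}=\tfrac1n\sum_{\ell=1}^n g_i(x_\ell)$, I conclude that the conditional mean $\mathbb{E}[\int f_i\,dP_{x^nY^n}\mid x^n]=\tfrac1n\sum_\ell g_i(x_\ell)$ lies within $\gamma/2$ of $\int f_i\,dP_{XY}$. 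This specifies the required $\bar\epsilon(\epsilon,\delta,P_X,P_{Y|X})$, and only the tail $n>N$ of the hypothesis is used.

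Finally, conditionally on $x^n$ the summands $f_i(x_\ell,Y_\ell)$ are independent (the channel is memoryless) and bounded by $\|f_i\|_\infty$, so a Chebyshev estimate yields $\Pr\{|\int f_i\,dP_{x^nY^n}-\tfrac1n\sum_\ell g_i(x_\ell)|\ge\gamma/2\mid x^n\}\le 4\|f_i\|_\infty^2/(n\gamma^2)$. Combining this with the previous paragraph and a union bound over $i=1,\dots,k$, the probability that some test function fails is at most $\sum_i 4\|f_i\|_\infty^2/(n\gamma^2)\to 0$, so $P_{x^nY^n}\in\setb$, that is $(x^n,Y^n)\in\sett^n_\epsilon(P_{XY})$, with probability tending to one. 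Hence the $\liminf$ in the claim is in fact equal to one and a fortiori exceeds $1-\delta$; indeed $\bar\epsilon$ may be chosen independently of $\delta$, giving a slightly stronger conclusion. The reduction to finitely many test functions and the Chebyshev concentration are routine; the continuity of $g_i$ handled above is the only delicate point.
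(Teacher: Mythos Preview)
The paper does not supply its own proof of this theorem; it explicitly refers the reader to \cite{Mitran}. Your argument is correct and is essentially the approach taken there: reduce the weak-open ball around $P_{XY}$ to finitely many bounded Lipschitz test functions $f_i$, push each through the channel to obtain $g_i(x)=\int f_i(x,y)\,P_{Y|X}(dy\mid x)$, use the continuity hypothesis on $x\mapsto P_{Y|X}(\cdot\mid x)$ together with the Portemanteau theorem to see that each $g_i$ is bounded continuous, transfer input typicality to control the conditional means $\tfrac1n\sum_\ell g_i(x_\ell)$, and finish with a Chebyshev bound on the independent summands $f_i(x_\ell,Y_\ell)$. Your remark that $\bar\epsilon$ can in fact be chosen independent of $\delta$ (so the $\liminf$ equals $1$) is a valid strengthening of the stated conclusion.
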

\begin{theorem}[\cite{Mitran}] \label{theorem:mitranLargedev}
    Let $P_{X,Y}$ be a joint distribution on $\setx \times \sety$ and $P_X$ and $P_Y$ denote its marginals. Let $y^n$ be a sequence and $X^n$ a random sequence drawn i.i.d. according to $P^n_X$.
    \begin{enumerate}
        \item If $\color{black}I(X;Y)\color{black}$ is finite, then for each $\delta > 0$, there are $\epsilon(\delta)$ and $\bar{\epsilon}(\delta)$ such that if $\epsilon < \epsilon(\delta)$, $\bar{\epsilon} < \bar{\epsilon}(\delta)$, and $y_n \in \sett^n_{\bar{\epsilon}}(P_Y)$ for all $n$ greater than some $N$, then
        \begin{align} &\limsup _{n} \frac {1}{n} \log \Pr \{( {{X}} ^{n}, {{y}} ^{n}) \in \sett^{n}_{\epsilon }(P_{XY}) \} \nonumber \\ & \quad \leq -I(X;Y) + \delta .  \label{eq:LD1}\end{align} 
        
        \item For each $\delta > 0$ and $\epsilon > 0$, there is an $\bar{\epsilon}(\epsilon, \delta) > 0$ such that if $y_n \in \sett^n_{\bar{\epsilon}(\epsilon, \delta}(P_Y)$ for all $n$ greater than some $N$, then
        \begin{align*} & \liminf _{n} \frac {1}{n} \log \Pr\{ ( {{X}} ^{n}, {{y}} ^{n}) \in \sett^{n}_{\epsilon }(P_{XY}) \} \nonumber\\ & \quad \geq -I(X;Y) - \delta . \end{align*} 
    \end{enumerate} \label{theorem:LargeDev}
\end{theorem}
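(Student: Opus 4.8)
The plan is to read both inequalities as a Cram\'er/Sanov-type large-deviation estimate governed by a Donsker--Varadhan variational identity, the central difficulty being that the natural tilting function on a Polish alphabet---the information density---is in general neither bounded nor continuous. Throughout, write $g:=\log\frac{dP_{XY}}{d(P_X\otimes P_Y)}$ for the information density, which is defined $P_{XY}$-a.e.\ and satisfies $\int g\,dP_{XY}=I(X;Y)$ as soon as $P_{XY}\ll P_X\otimes P_Y$; finiteness of $I(X;Y)$ is precisely what guarantees $g\in L^{1}(P_{XY})$. For a bounded continuous $\phi:\setx\times\sety\to\mathbb{R}$ introduce
\[ \Lambda(\phi):=-\int \phi\,dP_{XY}+\int_{\sety}\log\!\Big(\int_{\setx} e^{\phi(x,y)}\,dP_X(x)\Big)\,dP_Y(y). \]
A conditional Donsker--Varadhan inequality (apply the scalar one to $P_{X\mid Y=y}$ against $P_X$ and integrate in $y$) shows $\Lambda(\phi)\ge -I(X;Y)$ for every such $\phi$, with equality in the limit as $\phi$ approximates $g$; this identity is the engine for both parts.

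For part~1 (upper bound) I would fix a bounded continuous $\phi$ and exploit that on the event $(X^n,y^n)\in\sett^n_\epsilon(P_{XY})$ the joint empirical measure is $\epsilon$-close to $P_{XY}$ in a metric generating the weak topology. By the Portemanteau theorem, supplemented by a quantitative bounded-Lipschitz estimate for the Prohorov metric, $\tfrac1n\sum_{\ell}\phi(X_\ell,y_\ell)=\int\phi\,dP_{X^ny^n}$ then lies within some $\eta(\epsilon)\to 0$ of $\int\phi\,dP_{XY}$. A Chernoff bound dominates the indicator of the typical set by $\exp\!\big(\sum_\ell \phi(X_\ell,y_\ell)-n(\int\phi\,dP_{XY}-\eta)\big)$; taking the $P_X^{\,n}$-expectation and using independence factorizes the moment generating function, and its normalized logarithm is $\int\psi\,dP_{y^n}$ with $\psi(y)=\log\int e^{\phi(x,y)}\,dP_X(x)$ bounded continuous. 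Since $y^n\in\sett^n_{\bar\epsilon}(P_Y)$, weak convergence forces $\int\psi\,dP_{y^n}\to\int\psi\,dP_Y$, whence $\limsup_n\tfrac1n\log\Pr\{\cdots\}\le \Lambda(\phi)+\eta$. Choosing $\phi$ to approximate $g$ (truncate $g$ to $[-M,M]$ and mollify to a bounded Lipschitz function, which is possible exactly because $I(X;Y)<\infty$) drives $\Lambda(\phi)\to -I(X;Y)$; the thresholds $\epsilon(\delta),\bar\epsilon(\delta)$ are then fixed so that $\eta$ and the $y^n$-averaging error are small for the selected $\phi$, and all residual errors are absorbed into $\delta$.

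For part~2 (lower bound) I may assume $I(X;Y)<\infty$, since otherwise the right-hand side is $-\infty$ and nothing is to prove. Here the idea is a change of measure to the conditional tilt $Q:=\prod_{\ell}P_{X\mid Y}(\cdot\mid y_\ell)$. Because $dP_X(x)=e^{-g(x,y)}\,dP_{X\mid Y=y}(x)$ on the support of $P_{X\mid Y=y}$, one has the lower bound $\Pr_{P_X^{\,n}}\{(X^n,y^n)\in\sett^n_\epsilon\}\ge \mathbb{E}_{Q}\big[\mathbf{1}\{(X^n,y^n)\in\sett^n_\epsilon\}\,e^{-\sum_\ell g(X_\ell,y_\ell)}\big]$. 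Intersecting the typical-set event with $\{\tfrac1n\sum_\ell g(X_\ell,y_\ell)\le I(X;Y)+\delta\}$ bounds the exponential factor below by $e^{-n(I(X;Y)+\delta)}$, so it remains to show the $Q$-probability of this intersection is bounded away from $0$. Under $Q$ the pair $(X^n,y^n)$ is produced with the correct conditional law from a $P_Y$-typical $y^n$, so Theorem~\ref{theorem:mitranJoint} (conditional joint typicality, with the two coordinates interchanged) gives $\Pr_Q\{(X^n,y^n)\in\sett^n_\epsilon\}\to 1$, while a truncated law of large numbers for $\tfrac1n\sum_\ell g(X_\ell,y_\ell)$ under $Q$---whose mean tends to $\int g\,dP_{XY}=I(X;Y)$ because $y^n$ is $P_Y$-typical---controls the information-density event. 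Combining the two yields $\liminf_n\tfrac1n\log\Pr\{\cdots\}\ge -I(X;Y)-\delta$.

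The main obstacle in both parts is the Polish-alphabet replacement of combinatorial type-counting: since $g$ may be unbounded and discontinuous, every step that would be an identity in the finite-alphabet proof---evaluating $\Lambda$ at $g$, the law of large numbers for $\sum_\ell g(X_\ell,y_\ell)$, and the change-of-measure exponent---must instead be executed with bounded continuous surrogates and then controlled by truncation, with $I(X;Y)<\infty$ supplying the integrability that makes the truncation errors vanish. A secondary, purely technical point is to reconcile the metric (weak-topology) definition of $\sett^n_\epsilon$ with the linear functionals $\int\phi\,dP_{X^ny^n}$; this is handled by restricting attention to bounded Lipschitz $\phi$, for which the Prohorov metric furnishes the required quantitative continuity.
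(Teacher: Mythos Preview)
The paper does not actually prove this theorem: immediately after stating Theorems~\ref{theorem:typicality}--\ref{theorem:mitranLargedev} it writes ``For the proof \ldots\ we refer the reader to \cite{Mitran}.'' So there is no in-paper proof to compare against; the result is imported wholesale from Mitran's work on weak$^*$-typicality.

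That said, your proposal is the natural Sanov/Donsker--Varadhan route and is in spirit how Mitran proceeds (he builds his typicality theory precisely so that the large-deviation machinery on Polish spaces applies). Your Part~1 outline is sound: the Chernoff/exponential-tilting argument with bounded Lipschitz $\phi$ and optimization toward the information density $g$ is exactly the right substitute for type-counting, and the Prohorov/bounded-Lipschitz link gives the needed quantitative control.

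There is, however, a genuine soft spot in your Part~2. You invoke Theorem~\ref{theorem:mitranJoint} ``with the two coordinates interchanged'' to conclude that under $Q=\prod_\ell P_{X\mid Y}(\cdot\mid y_\ell)$ the pair $(X^n,y^n)$ is jointly typical with high probability. But Theorem~\ref{theorem:mitranJoint} carries the standing hypothesis that the channel $x\mapsto P_{Y\mid X}(\cdot\mid x)$ is continuous; after swapping roles you would need $y\mapsto P_{X\mid Y}(\cdot\mid y)$ continuous, which is nowhere assumed here (and need not follow from the forward continuity). The same issue bites your ``truncated law of large numbers'' for $\tfrac1n\sum_\ell g(X_\ell,y_\ell)$ under $Q$: the summands are independent but not identically distributed, with means $m(y_\ell)=\int g(x,y_\ell)\,dP_{X\mid Y=y_\ell}(x)$, and passing from $\tfrac1n\sum_\ell m(y_\ell)$ to $\int m\,dP_Y$ via $y^n\in\sett^n_{\bar\epsilon}(P_Y)$ again requires some regularity of $y\mapsto m(y)$ that you have not established. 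Mitran's actual lower bound avoids this circularity by working directly with an open-set Sanov lower bound for the empirical measure (the weak topology is what makes $\sett^n_\epsilon$ an open neighborhood of $P_{XY}$), rather than by appealing to the conditional-typicality lemma; you should either follow that route or supply an independent argument that does not presuppose continuity of the reverse channel.
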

\color{black} For the proof of Theorem \ref{theorem:typicality}, Theorem \ref{theorem:mitranJoint} and Theorem \ref{theorem:mitranLargedev}, we refer the reader to \cite{Mitran}. \color{black}
\section{System Models, Definitions and Main Results}
\label{sec:systemModel}
In this section, we introduce our system model and present the main results of the paper.

\subsection{System Model and Definitions}
\label{systemmodel}
Let a memoryless bivarite source model with arbitrary distribution $P_{XY}$ and finite mutual information $I(X;Y)$, with generic variables $X$ and $Y$ on $\setx$ and $\sety$, respectively, be given. \color{black}Assume that $x\mapsto P_{Y|X}(\cdot|x)$ is continuous. 
 \color{black}Assume also that $\setx$ and $\sety$ are both Polish spaces. \color{black} The outputs of $X$ are observed only by Terminal $A$ and those of $Y$ only by Terminal $B$. Assume also that the joint distribution of $(X,Y)$ is known to both terminals. Terminal $A$ can send information to Terminal $B$ over \color{black} an arbitrary (not necessarily discrete) noisy memoryless channel $W$. \color{black}The capacity of the channel $W$ is finite and denoted by $C(W)$. There are no other resources available to any of the terminals, as depicted in Fig. \ref{fig:System}. 
\begin{figure}[t]
\centering
\tikzstyle{block} = [draw, rectangle, rounded corners,
minimum height=2em, minimum width=2cm]
\tikzstyle{blockchannel} = [draw, top color=white, bottom color=white!80!gray, rectangle, rounded corners,
minimum height=1cm, minimum width=.3cm]
\tikzstyle{input} = [coordinate]
\usetikzlibrary{arrows}
\scalebox{.85}{
\begin{tikzpicture}[scale= 1,font=\footnotesize]
\node[blockchannel] (source) {$P_{XY}$};
\node[blockchannel, below=2.8cm of source](channel) { memoryless noisy channel};
\node[block, below left=3cm of source] (x) {Terminal $A$};
\node[block, below right=3cm of source] (y) {Terminal $B$};
\node[above=1cm of x] (k) {$K=\Phi(X^n)$};
\node[above=1cm of y] (l) {$L=\Psi(Y^n,Z^n)$};

\draw[->,thick] (source) -- node[above] {$X^n$} (x);
\draw[->, thick] (source) -- node[above] {$Y^n$} (y);
\draw [->, thick] (x) |- node[below right] {$T^n=\Lambda(X^n)$} (channel);
\draw[<-, thick] (y) |- node[below left] {$Z^n$} (channel);
\draw[->] (x) -- (k);
\draw[->] (y) -- (l);

\end{tikzpicture}}
\caption{Memoryless source model with one-way communication over a memoryless channel.}
\label{fig:System}
\end{figure}
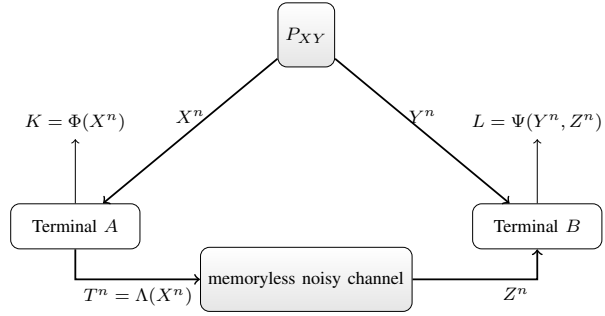
A CR-generation protocol \cite{part2} of block length $n$ and for the above-described system model consists of:
\begin{itemize}
    \item a Lebesgue-measurable function $\Phi$ at Terminal $A$ that maps $X^n$ into a random variable $K$ with \color{black} finite alphabet $\mathcal{K}.$ \color{black}
    \item a Lebesgue-measurable function $\Lambda$ at Terminal $A$ that maps $X^n$ into the \color{black}channel\color{black} \ input sequence $T^n$,
    \item a Lebesgue-measurable function $\Psi$ at Terminal $B$  that maps $Y^n$ and the \color{black}channel\color{black} \ output sequence $Z^n$ into a random variable $L$ with alphabet $\mathcal{K}.$ 
\end{itemize}
This protocol generates a pair of random variables \color{black} $(K,L)$ that is called permissible \cite{part2}.

\begin{definition}  A number $H$ is called an achievable CR rate for the system model in Fig. \ref{fig:System} if there exists a non-negative constant $c$ such that for every $\epsilon>0$ and $\gamma>0$ and for sufficiently large $n$ there exists a permissible  pair of random variables $(K,L)$ such that
\begin{align}
    \Pr\{K\neq L\}\leq \epsilon, 
    \label{errorcorrelated}
\end{align}
\begin{align}
    |\mathcal{K}|\leq 2^{cn},
    \label{cardinalitycorrelated}
\end{align}
\begin{align}
    \frac{1}{n}H(K)> H-\gamma.
     \label{ratecorrelated}
\end{align}
\end{definition}
\begin{definition} 
The CR capacity $C_{CR}(P_{XY},W)$ is the maximum achievable CR rate.
\end{definition}
\subsection{Main Results}
In this section, we characterize \color{black}$C_{CR}(P_{XY},W).$\color{black}
\color{black}
\begin{theorem} \label{theorem:main}
Consider the system model in Fig. \ref{fig:System} with the bivariate source model with arbitrary distribution described in Section \ref{systemmodel}.
 Let 
\color{black} \begin{align}
    &\mc U =
\{P_{U|X}: U \  \text{is defined on a Polish space} \ \text{and} \nonumber \\
    &\quad\quad \quad x\mapsto P_{U|X}(\mc A|x) \ \text{is measurable for every} \nonumber \\
    &\quad \quad \quad \text{measurable} \mc A\subseteq \mc U  \} \label{setUpolish}
\end{align}\color{black}
and
$$L^{(X,Y)}:t \mapsto 	\underset{ \substack{P_{U|X} \in \mathcal{U} \\{\substack{U \circlearrow{X} \circlearrow{Y}\\ I(U;X)-I(U;Y) \leq t}}}}{\sup} I(U;X).$$
Then, the CR capacity $C_{CR}(P_{XY},W)$ satisfies
\begin{equation}
    C_{CR}(P_{XY},W)\geq \underset{\alpha>0}{\sup}\ L^{(X,Y)}\left(C(W)-\alpha\right),
    \label{lowerboundCRcapacity}
\end{equation}
and
\begin{equation}
    C_{CR}(P_{XY},W)\leq \underset{\alpha>0}{\inf} \ L^{(X,Y)}\left(C(W)+\alpha\right),
    \label{upperboundCRcapacity}
\end{equation}
 where $C(W)$ denotes the Shannon capacity of the memoryless channel $W$. The lower and upper bound in \eqref{lowerboundCRcapacity} and \eqref{upperboundCRcapacity} are equal except possibly at the points of discontinuity of the CR capacity of which there are, at most, countably many.
\label{Theorem:Source1}
\end{theorem}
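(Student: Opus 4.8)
\emph{Strategy.} The plan is to transport the discrete two–source argument of Ahlswede and Csiszár \cite{part2} to Polish alphabets, replacing strong–typicality type counting by the weak–convergence typicality of Mitran (Theorems \ref{theorem:typicality}--\ref{theorem:mitranLargedev}). I first note that $L^{(X,Y)}$ is non-decreasing, since enlarging $t$ enlarges the feasible set $\{P_{U|X}\in\mathcal U:\,U\circlearrow X\circlearrow Y,\ I(U;X)-I(U;Y)\le t\}$ and hence can only increase the supremum. Consequently the two quantities in \eqref{lowerboundCRcapacity} and \eqref{upperboundCRcapacity} are exactly the left and right limits of $L^{(X,Y)}$ at $C(W)$. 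A monotone function on $\mathbb R$ has one-sided limits everywhere and these agree outside an at most countable set; this already gives the last sentence of the theorem, so it remains to establish the two bounds.

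\emph{Lower bound (achievability).} Fix $\alpha>0$ and a kernel $P_{U|X}\in\mathcal U$ with $U\circlearrow X\circlearrow Y$ and $I(U;X)-I(U;Y)\le C(W)-\alpha$. Draw $M\approx 2^{nI(U;X)}$ codewords $u^n(m)$ i.i.d.\ from $P_U^{\,n}$ and split them into $B\approx 2^{n(I(U;X)-I(U;Y))}$ bins. Observing $X^n$, Alice selects an index $K=m$ with $(X^n,u^n(m))\in\sett^n_\epsilon(P_{XU})$ and transmits the bin label $b(K)$ through $W$ with a capacity-approaching channel code, which is possible because $\tfrac1n\log B\le C(W)-\alpha<C(W)$. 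Bob recovers $b(K)$ with vanishing error and then looks in that bin for the unique $u^n$ with $(u^n,Y^n)\in\sett^n_\epsilon(P_{UY})$, outputting its index $L$. Theorem \ref{theorem:typicality} and Theorem \ref{theorem:mitranJoint} guarantee that the covering step and the joint typicality of $(X^n,U^n,Y^n)$ succeed with probability approaching one, while the large-deviations bound \eqref{eq:LD1} of Theorem \ref{theorem:mitranLargedev} limits the number of spurious bin-mates jointly typical with $Y^n$, yielding $\Pr\{K\neq L\}\le\epsilon$. By construction $K$ is essentially uniform over its $2^{nI(U;X)}$ values, so $|\mathcal K|\le 2^{cn}$ and $\tfrac1nH(K)\ge I(U;X)-\gamma$. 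Optimizing over admissible $U$ and then over $\alpha>0$ gives \eqref{lowerboundCRcapacity}.

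\emph{Upper bound (converse).} Let $(K,L)$ be permissible with $\Pr\{K\neq L\}\le\epsilon$, $|\mathcal K|\le2^{cn}$, $\tfrac1nH(K)>H-\gamma$, and recall $K=\Phi(X^n)$, $T^n=\Lambda(X^n)$, $L=\Psi(Y^n,Z^n)$. Since $L$ is a function of $(Y^n,Z^n)$, Fano gives $H(K\mid Y^n,Z^n)\le H(K\mid L)\le 1+\epsilon cn$. The channel being memoryless with $Z^n$ depending on $X^n$ only through $T^n$, the relations $K\circlearrow X^n$ and $Z^n\circlearrow T^n\circlearrow(X^n,Y^n)$ give $I(K;Z^n\mid Y^n)\le I(T^n;Z^n)\le nC(W)$, whence
\begin{equation*}
H(K\mid Y^n)=H(K\mid Y^n,Z^n)+I(K;Z^n\mid Y^n)\le nC(W)+1+\epsilon cn .
\end{equation*}
For single-letterization put $U_i:=(K,Y^{i-1},X_{i+1}^n)$; the i.i.d.\ structure makes $U_i\circlearrow X_i\circlearrow Y_i$ Markov, and the Csiszár sum identity yields the exact decomposition $I(K;X^n)-I(K;Y^n)=\sum_{i=1}^n\big(I(U_i;X_i)-I(U_i;Y_i)\big)$. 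With a uniform time-sharing variable $Q$ and $U:=(U_Q,Q)$, $X:=X_Q$, $Y:=Y_Q$ one gets $I(U;X)=\tfrac1n\sum_i I(U_i;X_i)\ge\tfrac1nH(K)>H-\gamma$ and $I(U;X)-I(U;Y)=\tfrac1nH(K\mid Y^n)\le C(W)+\epsilon c+\tfrac1n$. Because $U$ lives on a product of Polish spaces (hence Polish) and Polish marginals admit regular conditional distributions, $x\mapsto P_{U|X}(\cdot\mid x)$ is a measurable kernel, i.e.\ $U\in\mathcal U$; thus $U$ is feasible in the definition of $L^{(X,Y)}(C(W)+\epsilon c+\tfrac1n)$ and $H-\gamma\le L^{(X,Y)}(C(W)+\epsilon c+\tfrac1n)$. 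Fixing $\alpha>0$, choosing $\epsilon,1/n$ with $\epsilon c+1/n\le\alpha$, letting $\gamma\downarrow0$ and $n\to\infty$, monotonicity gives $H\le L^{(X,Y)}(C(W)+\alpha)$; the infimum over $\alpha$ is \eqref{upperboundCRcapacity}.

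\emph{Main obstacle.} The combinatorial skeleton is classical; the real work is measure-theoretic. Every mutual information above must be shown finite and every chain-rule and Csiszár-sum manipulation justified on Polish spaces — this is precisely where the standing assumption $I(X;Y)<\infty$ enters and where the type counting of \cite{part2} must be systematically replaced by the weak-typicality large-deviation estimates of Theorem \ref{theorem:mitranLargedev}. One must also check that the converse auxiliary is genuinely an element of $\mathcal U$, i.e.\ that the regular conditional kernel $P_{U|X}$ exists and is measurable, which is exactly what the Polish hypothesis secures. Finally, the theorem only claims equality of the two bounds off a countable set because $L^{(X,Y)}$ may be discontinuous at $C(W)$: for infinite Polish alphabets continuity of $L^{(X,Y)}$ (or attainment of its defining supremum) can fail, so the achievability left limit and the converse right limit need not meet — yet, as one-sided limits of a monotone function, they coincide for all but at most countably many values of $C(W)$.
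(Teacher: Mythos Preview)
Your proposal is correct and follows essentially the same route as the paper: Wyner--Ziv style random binning over Mitran's weak-topology typical sets for the direct part, and a Csisz\'ar-sum single-letterization combined with Fano and the memoryless-channel bound $I(K;Z^n\mid Y^n)\le nC(W)$ for the converse. Two small differences are worth recording. First, your converse auxiliary $U_i=(K,Y^{i-1},X_{i+1}^n)$ is the mirror image of the paper's $U=(K,X^{J-1},Y_{J+1}^n,J)$; both satisfy the required Markov chain and yield the same telescoping identity, so this is a matter of taste. Second, you make explicit two points the paper leaves implicit: that monotonicity of $L^{(X,Y)}$ is what reduces the ``equality except on a countable set'' claim to a statement about one-sided limits of a monotone function, and that the converse auxiliary belongs to $\mathcal U$ because products of Polish spaces are Polish and therefore admit regular conditional probabilities. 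These additions strengthen the write-up without changing the argument.
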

\color{black}
\color{black}
\begin{remark}
    For the case when the joint probability distribution $P_{XY}$ is discrete and $\setx, \ \sety$ are finite, by continuity, the bounds in \eqref{lowerboundCRcapacity} and \eqref{upperboundCRcapacity} coincide as studied in \cite{part2}.
    \end{remark}
    \begin{remark}      
In recent times, research in information theory has made significant progress, proving capacity results for increasingly complex communication scenarios involving arbitrary distributions defined on infinite alphabets. Often, a discontinuity behavior similar to the one in Theorem \ref{theorem:main} occurs  (see \cite{CRMIMOslowfading}, \cite{epsilonUCRcapacity}). For some special cases, interesting solutions are achieved. For instance, unlike in the multiple-input multiple-output (MIMO) case, no discontinuities arise in the characterization of the outage CR capacity of single-input multiple-output (SIMO) slow fading channels with arbitrary state distribution \cite{CRMIMOslowfading}. However, it remains unclear when the discontinuities occur and what values the capacities take at these points. Additionally, determining if there is a general pattern and understanding the reasons behind the discontinuities are still open problems.
    \end{remark}
    \color{black}

\section{Proof of Theorem \ref{Theorem:Source1}}
\label{proofthmsource1and2}
In this section, we provide a proof of Theorem \ref{theorem:main}. For the sake of simplicity, we restrict the proof to the case when the probability distribution $P_{XY}$ is absolutely continuous. This can be analogously extended to arbitrary distributions on Polish spaces.
\subsection{Proof of the Lower-Bound in  Theorem \ref{Theorem:Source1}}
For the achievability part, we extend the coding scheme introduced in \cite{part2} to bivariate source models with arbitrary distribution $P_{XY}$ (here absolutely continuous) and finite mutual information as described in Section \ref{sec:systemModel} \color{black}using generalized typicality \color{black} introduced in Section \ref{subsec: typicality}. We consider the same code construction as used in \cite{part2} based on the same type of binning as for the Wyner-Ziv problem. As we assumed that $I(X;Y)$ is finite, we can easily verify that the quantities $I(U;X),\ I(U;Y)$ are also finite. Thus, Theorem \ref{theorem:LargeDev} can be applied subsequently. 
\color{black}Let $\alpha>0$ be fixed arbitrarily. Let $U$ be a RV  such that $ P_{U|X}\in \mathcal{U}$ and such that $U \circlearrow{X} \circlearrow{Y}$ and $I(U;X)-I(U;Y) \leq C^\prime$ with $C^\prime=C-\alpha.$  \color{black}

{\bf{Code Construction}}: We generate $N_1N_2$ codewords $u^n(i,j),\quad i=1,\ldots,N_1,\ j=1,\ldots,N_2$ by choosing the $n.(N_1N_2)$ symbols $u_l(i,j)$, $l=1,\ldots,n$, independently at random using $P_U$. Each realization $u^n_{i,j}$ of $U^n_{i,j}$ is known to both terminals. For some $\sigma>0$, let $0<\delta_1<2\sigma$. Let $\delta,\bar{\delta},\delta_2,\delta_3 >0$. Let
$N_{1}=2^{\left(n[I(U;X)-I(U;Y)+4\sigma]\right)}$and $
N_{2}=2^{\left(n[I(U;Y)-2\sigma]\right)}.$ 

{\bf{Encoder}}: Let $(x^n,y^n)$ be a realization of $(X^n,Y^n)$. Given $x^n$ with $(x^n,y^n) \in \sett_\delta^n(P_{XY})$, the sender tries to find a pair $(i,j)$ such that $\left(x^n,u^n(i,j) \right) \in \sett_{\delta_1}^n(P_{UX})$. If successful, let $f(x^n)=i$. If no such ${u}^n(i,j)$ exists, then $f({x}^n)=N_1+1$ and $\Phi({x}^n)$ is set to a constant sequence ${u}^n_0$ different from all the ${u}^n(i,j)$s and known to both terminals. We choose $\sigma$ to be sufficiently small such that $\frac{\log(N_1+1)}{n} \leq C^\prime-\sigma',$
for some $\sigma'>0$ and where $N_1+1$ is the cardinality of the set of messages $\{{i}^{\star}=f({x}^n)\}$. The message $i^{\star}=f({x}^n)$, with $i^{\star}\in\{1,\hdots,N_1+1\}$, is encoded to a sequence ${t}^n$ using a suitable code sequence with rate $\frac{\log  (N_1+1) }{n}\leq C^\prime-\sigma'$  and with error probability not exceeding $\frac{\epsilon}{2}$ for sufficiently large $n$. The sequence ${t}^n$ is sent over the channel $W$. 

{\bf Decoder}: Let ${z}^n$ be the channel output sequence. Terminal $B$ decodes the message $\hat{\imath}^{\star}$ from the knowledge of ${z}^n$. Given $\hat{\imath}^{\star}$ and $y^n$, the decoder tries to find ${j}$ such that $\left(y^n,u^n(\hat{\imath}^{\star},{j})\right) \in \sett_{\delta_2}^n(P_{UY})$. If successful, let $L({y}^n,\hat{\imath}^{\star})={u}^n(\hat{\imath}^{\star},{j})$. If there is no such ${u}^n(\hat{\imath}^{\star},{j})$ or there are several, $L$ is set to ${u}^n_0$ (since $K$ and $L$ must have the same alphabet).

{\bf Validity of the requirements in \eqref{errorcorrelated}, \eqref{cardinalitycorrelated} and \eqref{ratecorrelated}:} Let $I^\star=f(X^n)$ be the RV modeling the message encoded by Terminal $A$ and let $\hat{I}^\star$ be the RV modeling the message decoded by Terminal $B$. We first prove that \eqref{errorcorrelated} is satisfied. We have:

    $\Pr\{K\neq L\} 
        \leq \Pr\{K\neq L|I^\star=\hat{I}^\star\}+ \Pr\{I^\star\neq\hat{I}^\star\}.$

We establish an upper-bound on $\Pr\{K\neq L|I^\star=\hat{I}^\star\}$. For this, we consider the probability events $\sete_1, \ \sete_2, \ \sete_3$ and $\sete_4$. 
\begin{enumerate}
\item The source sequences are not jointly typical:
    $\sete_1:=\left\{ (X^n, Y^n) \notin \sett_\delta^n(P_{XY}) \right\}.$

\item For a fixed $x^n$, the encoder cannot find a pair $(i,j)$ such that $(U^n(i,j),x^n) \in \sett_{\delta_1}^n(P_{UX})$:
   $\color{black}\sete_2:= \bigcap_{\substack{i=1,\ldots,N_1 \\j=1,\ldots,N_2}}  \left \{(U^n(i,j), x^n) \notin \sett_{\delta_1}^n(P_{UX}) \right\}.\color{black}$
\item The decoder finds $\tilde{j}\neq j$ such that $\left(y^n,U^n({i},\tilde{j})\right) \in \sett_{\delta_2}^n(P_{UY}) $:
     $\sete_3:=\bigcup_{\substack{\tilde{j}=1,\ldots,N_2 \\ \tilde{j}\neq j}} \left\{ \left(y^n,U^n({i},\tilde{j})\right) \in \sett_{\delta_2}^n(P_{UY}) \right\}.$
 \item 
 The decoder cannot find ${\tilde{j}}$ such that $\left(y^n,U^n({i},{\tilde{j}})\right) \in \sett_{\delta_2}^n(P_{UY}) $:
   $\color{black}\sete_4:= \bigcap_{\tilde{j}=1,\ldots,N_2} \left \{ \left(U^n(i,\tilde{j}),X^n,Y^n\right) \notin \sett_{\delta_3}^n(P_{UXY}) \right\}.\color{black}$
\end{enumerate}
By the union bound, it holds that $\Pr[K\neq L|I^\star=\hat{I}^\star]$ is upper-bounded by
${\color{black}{\Pr[\sete_1]+\Pr[\sete_2]+\Pr[\sete_3]+\Pr[\sete_4 \cap \sete_2^c]}}.$
In the following, we compute an upper-bound on the \color{black} probability of the events {\color{black}{$\sete_1$, $\sete_2$, $\sete_3$ and $\sete_4 $.}}  \color{black}
We have
\begin{align*}
   \Pr\{\sete_1\} &= \Pr \left \{ (X^n, Y^n) \notin \sett_\delta^n(P_{XY}) \right \}\\
  & =1-P_{XY}^n\left(\sett_\delta^n(P_{XY})\right) \overset{(a)}{\leq}\alpha_1(n),
\end{align*}
where $\alpha_1(n)\leq \frac{\epsilon}{6}$ for sufficiently large $n$ and $(a)$ \color{black}follows from Theorem \ref{theorem:typicality}. \color{black}
We have
 \begin{align*}
&\Pr\{\sete_2\} \nonumber \\
&=\Pr\{X^n \notin \sett_{\bar{\delta}}^n(P_X) \}  + \int_{x^n  \in \sett_{\bar{\delta}}^n(P_X)} P_{X^n}(x^n)\nonumber \\
&\quad  \times \Pr\{\bigcap_{\substack{i=1,\ldots,N_1 \\j=1,\ldots,N_2}}  (U^n(i,j), x^n) \notin \sett_{\delta_1}^n(P_{UX})|X^n=x^n\} dx^n\\
&\overset{(a)}{\leq} \kappa(n) \nonumber \\
& \quad + \int_{x^n \in \sett_{\bar{\delta}}^n(P_{X})} P_{X^n}(x^n) \left( 1-2^{-n(I(U;X)+\delta_1) } \right)^{N_1N_2} dx^n \\
    & \overset{(b)}{\leq} \kappa(n)+ \exp(-2^{n(2\sigma-\delta_1)}) \quad \leq \alpha_2(n),
\end{align*}
where $\alpha_2(n) \leq \frac{\epsilon}{6}$ for sufficiently large $n$ and $(a)$  follows from Theorem \ref{theorem:typicality},  because the $U^n(i,j)s$ are independent of $X^n,$ because the $N_1N_2$ events of the intersection are independent and from Theorem \ref{theorem:LargeDev}, $(b)$ follows because $(1-x)^m\leq \exp(-mx)$ and from from the definition of $N_1,N_2.$ 
{\color{black}{We have}}
 \begin{align*}
    \Pr\{\sete_3\}& \overset{(a)}{\leq} \sum_{\tilde{j}\neq j} \Pr\left\{\left(Y^n,U^n({i},\tilde{j})\right) \in \sett_{\delta_2}^n(P_{UY}) \right \} \\
    & \overset{(b)}{<} N_2 \cdot 2^{-n(I(U,Y)+\delta_2)}  = 2^{-n (\delta_2+2\sigma)},
\end{align*}
where $(a)$ follows from the union bound and $(b)$ follows from Theorem \ref{theorem:LargeDev}.  
It holds that
    $\Pr\{\sete_4 \cap \sete_2^c\}\leq \alpha_3(n),$ 
where $\alpha_3(n) \leq \frac{\epsilon}{6}$ for sufficiently large $n$ and where we used the result of Theorem \ref{theorem:conditionalTyp}.
 Now, we know that for sufficiently large $n$, we have  $\alpha_1(n)+\alpha_2(n)+ 2^{-n (\delta_2+2\sigma)}+ {\alpha_3}(n)  \leq \frac{\epsilon}{2}.$
 It follows from the last inequality and the definition of the code sequence $t^n$ that for sufficiently large $n,$ we have

$\Pr\{K\neq L\} 
    \leq \Pr\{K\neq L|I^\star=\hat{I}^\star,\sete^c\}+ \Pr\{I^\star\neq\hat{I}^\star] 
   {\leq}  \epsilon$.
Now, we are going to show that  $(K,L)$ {\color{black}{satisfies \eqref{cardinalitycorrelated} and \eqref{ratecorrelated}.}}
Clearly, \eqref{cardinalitycorrelated} is satisfied  for $c=2\left[I(U;X)+\sigma\right]$, $n$ sufficiently large, since we have$\lvert \mc K \rvert=N_1N_2+1\leq  2^{(2n\left[I(U;X)+\sigma\right])}.$

For any realization $u^n(i,j)$ of $U^n(i,j)$, it holds that
\begin{align*}
    &\Pr\{K={u}^n(i,j)\} \\
&\overset{(a)}{=}\int_{{x}^n\in \sett_{\bar{\delta}}^n(P_X)} \Pr\{K={u}^n(i,j)|X^n={x}^n\}P_{X}^n({x}^n) dx^n  \\
 &\overset{(b)}{\leq} 2^{\left(-n(I(U;X)+\bar{\delta})\right)}. 
\end{align*}
$(a)$ follows because for $(x^n,{u}^n(i,j))$ not jointly typical, we have $\Pr \{K={u}^n(i,j)|X^n={x}^n\}=0$ and $(b)$ follows from Theorem \ref{theorem:LargeDev}. This yields
$H(K)  \geq 
n (I(U;X)+ \bar{\delta}) \geq n H.$ 
This completes the proof of the lower-bound in Theorem \ref{Theorem:Source1}. \qed
\subsection{Proof of the Upper-Bound in Theorem \ref{Theorem:Source1}}
\label{conversethmsource1}
Let $H$ be any achievable CR rate for the system model presented in Section \ref{systemmodel}. So, there exists a non-negative constant $c$ such that for every $\epsilon>0$ and $\gamma>0$ and for sufficiently large $n$ there exists a permissible  pair of RVs $(K,L)$  w.r.t. a fixed CR-generation protocol of block-length $n,$ such that $\Pr[K\neq L]\leq \epsilon,$ $|\mathcal{K}|\leq 2^{cn}$ and $ \frac{1}{n}H(K)> H-\gamma.$
\color{black}
Let $J$ be a random variable uniformly distributed on $\{1,\dots, n\}$ and independent of $K$, $X^n$ and $Y^n$. We further define $U=(K,X_{1},\dots, X_{J-1},Y_{J+1},\dots, Y_{n},J).$ It holds that $U \circlearrow{X_J} \circlearrow{Y_J}$ and that $P_{U|X_J} \in \mc U,$ where $\mc U$ is defined \eqref{setUpolish}. As shown in \cite{part2}, it holds that $\frac{H(K)}{n}\leq I(U;X_J)$ and that $\frac{H(K)}{n}=I(U;X_J)-I(U;Y_J).$
Now, we have
\begin{equation}
H(K|Y^{n})=I(K;Z^{n}|Y^{n})+H(K|Y^{n},Z^{n}).\label{boxed2}
\end{equation}
On the one hand, it holds that
\begin{align} 
 I(K;Z^{n}|Y^{n})
& \overset{(a)}{\leq }I(T^n;Z^n|Y^{n})  \nonumber \\
& \overset{(b)}{\leq } \sum_{i=1}^{n} I(Z_{i};T^n|Z^{i-1}) \nonumber \\
& \overset{(c)}{\leq}\sum_{i=1}^{n} I(T_{i};Z_{i}) \leq n C(W), \label{part1}
\end{align}
where $(a)$ follows from the fact that $I(K;Z^{n}|Y^{n})\leq I(X^{n}K;Z^{n}|Y^{n}) $ and from the Data Processing Inequality because $Y^{n}\circlearrow{X^{n}K}\circlearrow{T^n}\circlearrow{Z^{n}}$ forms a Markov chain, where we used the fact that the Data Processing inequality holds also for arbitrary random variables defined on an abstract alphabet \cite[p. 39]{dataprocessing}, $(b)$ follows from the fact that $Y^{n}\circlearrow{X^{n}K}\circlearrow{T^n}\circlearrow{Z^{n}}$ forms a Markov chain, from the fact that conditioning does not increase entropy and  from the chain rule for mutual information and $(d)$ follows because $T_{1},\dots, T_{i-1},T_{i+1},\dots, T_{n},Z^{i-1} \circlearrow{T_{i}}\circlearrow{Z_{i}}$ forms a Markov chain and because conditioning does not increase entropy.
\color{black}
On the other hand, it holds that
\begin{align}
H(K|Y^{n},Z^{n})&\overset{(a)}{\leq } H(K|L) \nonumber \\
&\overset{(b)}{\leq } 1+\log\lvert \mathcal{K} \rvert \Pr[K\neq L] \nonumber \\
&\overset{(c)}{\leq }1+\epsilon c n, \label{part2}
\end{align}
where (a) follows because $L=\Psi(Y^{n},Z^{n})$, 
(b) follows from Fano's Inequality using \eqref{errorcorrelated} and (c) follows from \eqref{cardinalitycorrelated}.

It follows from \eqref{boxed2}, \eqref{part1} and \eqref{part2} that \color{black}
$\frac{H(K|Y^n)}{n}\leq C(W)+\mu(n,\epsilon),$ with $\mu(n,\epsilon)=\frac{1}{n}+\epsilon c.$ 
We deduce that \color{black}
$
    I(U;X_{J})-I(U;Y_{J})\leq C(W)+\mu(n,\epsilon).$ \color{black}
Since the joint distribution of $X_{J}$ and $Y_{J}$ is equal to $P_{XY}$, $\frac{H(K)}{n}$ is upper-bounded by $I(U;X)$ subject to $I(U;X)-I(U;Y) \leq C(W)+\mu(n,\epsilon)$ with $P_{U|X}$ satisfying $U \circlearrow{X} \circlearrow{Y}$. As a result, for every $\epsilon,\gamma>0$ and for sufficiently large $n,$
\begin{align}
    H < \underset{ \substack{P_{U|X}\in \mc U \\{\substack{U \circlearrow{X} \circlearrow{Y}\\ I(U;X)-I(U;Y) \leq C(W)+\mu(n,\epsilon)}}}}{\sup} I(U;X)+ \gamma.
    \label{tominimize}
\end{align}
By taking the limit when $n$ tends to infinity and then the infimum over all $\epsilon,\gamma>0,$ of the right-hand side of \eqref{tominimize}, it follows that
$H\leq\underset{\alpha>0}{\inf} \underset{ \substack{P_{U|X} \in \mc U\\{\substack{U \circlearrow{X} \circlearrow{Y}\\ I(U;X)-I(U;Y) \leq C(W)+\alpha}}}}{\sup} I(U;X).$
This completes the proof of the upper-bound Theorem \ref{Theorem:Source1}.
\qed
\color{black}
\section{Conclusions} \label{sec:Conclusions}
In this paper, we investigated the problem of CR generation from bivariate sources with infinite Polish alphabets, aided by one-way communication over noisy memoryless channels. We established a single-letter lower and upper bound on the CR capacity for the specified model. \color{black}The bounds are equal except for at most countably many points where discontinuity issues might arise. However, determining when the discontinuities occur and the specific values the CR capacities take at these points remains an open problem. \color{black}

\section{Acknowledgments}
The authors acknowledge the financial support by the Federal Ministry of Education and Research
of Germany (BMBF) in the programme of “Souverän. Digital. Vernetzt.”. Joint project 6G-life, project identification number: 16KISK002. 
H. Boche and C. Deppe acknowledge the financial support
from the BMBF Quantum Programme QuaPhySI under Grant
16KIS1598K, QUIET under Grant 16KISQ093, and the QC-
CamNetz Project under Grant 16KISQ077. They were also sup-
ported by the DFG within the project "Post Shannon Theorie
und Implementierung" under Grants BO 1734/38-1 and DE 1915/2-1.
W. Labidi, R. Ezzine and C.\ Deppe were further supported in part by the BMBF within the national initiative on Post Shannon Communication (NewCom) under Grants 16KIS1003K and 16KIS1005. C. Deppe was also supported by the German Research Foundation (DFG) within the project DE1915/2-1. M. Wiese was further supported by the Bavarian Ministry of
Economic Affairs, Regional Development and Energy as part of the project 6G
Future Lab Bavaria.
\appendices

\bibliographystyle{IEEEtran}
\bibliography{definitions,references}

\IEEEtriggeratref{4}



\end{document}